\newcommand{\beq}{\begin{displaymath}}
\newcommand{\eeq}{\end{displaymath}}
\newcommand{\beqn}{\begin{equation}}
\newcommand{\eeqn}{\end{equation}}
\newcommand{\beqa}{\begin{eqnarray*}}
\newcommand{\eeqa}{\end{eqnarray*}}
\newcommand{\beqna}{\begin{eqnarray}}
\newcommand{\eeqna}{\end{eqnarray}}
\newcommand{\eq}[1]{~(\ref{#1})}  
\newcommand{\N}{\mathbb{N}}
\newcommand{\R}{\mathbb{R}}
\newtheorem{prop}{Proposition}[section]
\newtheorem{thm}[prop]{Theorem}
\theoremstyle{definition} 
\title{Quantum correlations in the temporal CHSH scenario}
\author{Tobias Fritz\footnote{This work grew out of joint discussions with Sibasish Ghosh and Tomasz Paterek.}\\
\small{Max Planck Institute for Mathematics}\\
\small{\texttt{fritz@perimeterinstitute.ca}}}
\begin{document}

\maketitle

\begin{abstract}
We consider a temporal version of the CHSH scenario using projective measurements on a single quantum system. It is known that quantum correlations in this scenario are fundamentally more general than correlations obtainable with the assumptions of macroscopic realism and non-invasive measurements. In this work, we also educe some fundamental limitations of these quantum correlations. One result is that a set of correlators can appear in the temporal CHSH scenario if and only if it can appear in the usual spatial CHSH scenario. In particular, we derive the validity of the Tsirelson bound and the impossibility of PR-box behavior. The strength of possible signaling also turns out to be surprisingly limited, giving a maximal communication capacity of approximately $0.32$ bits. We also find a temporal version of Hardy's nonlocality paradox with a maximal quantum value of $1/4$. 
\end{abstract}

\section{Introduction}

Quantum theory displays many counterintuitive features which are in stark contrast to our everyday experiences in the macroscopic world. Possibly the most extreme of these is the collapse of the wavefunction due to measurement; its contentious interpretation has given rise to the \emph{measurement problem}. Obviously, the only possibility to observe and study wavefunction collapse and its entailments is to conduct measurements on the collapsed wavefunction. Therefore, in order to gain a better understanding of what the collapse means and how it occurs, one has to study \emph{repeated} measurements on the same quantum system, both from a theoretical and from an experimental perspective. This should be seen as motivation for our work on temporal quantum correlations. In theories different from orthodox quantum mechanics, for example when wavefunction collapse is not absolutely instanteous~\cite{Pearle}, the properties of temporal correlations are likely to be different from those presented here.

Quantum correlations have mostly been investigated for scenarios of several spacelike separated parties sharing some nonlocal correlations. The simplest situation one can consider here is the Clauser-Horne-Shimony-Holt (CHSH)~\cite{CHSH} scenario: two parties, commonly dubbed \emph{Alice} and \emph{Bob}, each operate with a physical system of their own on which they respectively conduct one of two dichotomic (i.e. two-valued) measurements. Then, on the one hand, quantum theory entails phenomena that cannot be achieved classically: many quantum states that have the property of being \emph{entangled} let Alice and Bob observe correlations between their measurements which cannot be explained by classical models defined in terms of local hidden variables; this non-classicality can be detected by observing violations of the \emph{CHSH inequalities}. These inequalities precisely characterize those correlations having local hidden variable models. Furthermore, \emph{Hardy's nonlocality paradox}~\cite{Har} shows that this feature is not solely a quantitative trait of the joint outcome probabilities: it proves that there also exists a qualitative difference between quantum correlations and the realm of local hidden variable models. On the other hand, it has been found out that there are nevertheless strict limitations on which correlations can be observed with quantum-mechanical systems. The \emph{Popescu-Rohrlich box} (PR-box) is a joint probability distribution that is consistent with the causality principle of \emph{no-signaling}, but yet such a PR box cannot be constructed in a quantum-mechanical world. This can be seen most directly from the \emph{Tsirelson bound}, which specifies the maximal quantum violation of the CHSH inequalities.

In this paper, we study a \emph{temporal} version of the CHSH scenario. We may imagine a single physical system in a laboratory, on which the two experimentalists Alice and Bob can conduct their measurements. However it so happens that their work shifts do not intersect, and Alice leaves the lab before Bob arrives. Now it is known that Alice, during her shift, has measured one of the two $\pm 1$-valued observables $a_1$ or $a_2$, and likewise, Bob will measure one of the two $\pm 1$-valued observables $b_1$ or $b_2$. It is crucial to assume that Alice only conducts one of the two projective measurements $a_1$ and $a_2$, so that she cannot disturb the system and its natural dynamics in any other way. Then which joint probability distributions for the measurement outcomes can possibly arise in this way? In the following sections we answer certain aspects of this question. Just like in the spatial case, we find both fundamental possibilities achievable by such quantum correlations, and fundamental limitations on these quantum correlations. There are analogues of all the spatial phenomena mentioned in the previous paragraph: impossibility of hidden variable models---following~\cite{Lap}, no locality or non-invasiveness assumption is actually needed---, a version of Hardy's paradox which turns out to be stronger than in the spatial scenario, the possibility of signaling in a limited form, impossibility of the PR-box, and the Tsirelson bound. Moreover, although the set of joint probabilities realizable by spatial quantum correlations is strictly contained in the set of joint probabilities realizable by temporal quantum correlations, we find that the set of realizable $\emph{correlators}$ is the same in the temporal case as in the spatial case.

There has been a considerable amount of previous work on the properties of temporal quantum correlations. In particular, the Leggett-Garg inequalities~\cite{LG} characterize the probabilistic hidden variable models for the scenario that one measures two-time correlators between three $\pm 1$-valued observables\footnote{In the standard scenario, these three observables are actually a single observable measured at three different times, but this assumption is not relevant to the argument.}, and it is known that these can be violated quantum-mechanically. In contrast to spacelike separated situations, it is not necessary here to have more than one observable for each ``party'', i.e. at each point in time, since the observables between the different points in time need not commute, leading to specifically quantum phenomena. Very recently, Avis, Hayden and Wilde~\cite{AHW} have classified all tight Leggett-Garg inequalities for the two-time correlators between any number of dichotomic observables as precisely the facets of the cut polytope. Some other relevant references include~\cite{Lap} and~\cite{Bru}.

\section{Joint probabilities in the temporal CHSH scenario}
\label{temporalCHSH}

We start with several statements about temporal correlations between projective quantum measurements of $\pm 1$-valued observables. Then we describe the temporal CHSH scenario, which has been outlined in the introduction, in a little more detail.

\paragraph{Setting the stage.}
\label{stage}
Consider a single quantum system with an underlying Hilbert space $\mathcal{H}$ and dynamics described by the Hamiltonian $H$. Furthermore, we have $\pm 1$-valued, i.e. dichotomic, observables $a$ and $b$, which are hermitian operators on $\mathcal{H}$ with the property
\beq
a^2=\mathbbm{1},\qquad b^2=\mathbbm{1}.
\eeq
Note that we can bring any pair of two-valued observables into this form by relabelling the outcomes as $+1$ and $-1$. Now Alice measures $a$ at time $t_A$ and Bob measures $b$ at time $t_B$. Both measurements are assumed to be perfect projective von Neumann measurements, so that the state collapses to an eigenstate of the corresponding observable upon the measurement. This assumption is relevant for Alice since it limits the way in which her measurement $a$ can influence the system; we will see in paragraph~\ref{CHSHpar} that if we would allow arbitrary generalized measurements (L\"uders measurements) for Alice, then any set of joint outcome probabilities without signaling from Bob to Alice could be modelled even with commuting Kraus operators, i.e. with a classical probabilistic system.

However for Bob, the assumption of projective measurements is not essential: since his post-measurement state does not get measured, this post-measurement state is irrelevant and only his outcome probabilities matter. And concerning these, we can always enlarge the Hilbert space to turn any POVM into a projective measurement while preserving the outcome probabilities.

We take the system to be in the pure initial state $|\psi\rangle$ just before Alice's measurement at time $t_A$. The assumption of a pure initial state is merely for notational convenience, and all following calculations would also apply mutatis mutandis to the case of a mixed initial state. Note also that in case of a mixed initial state described by a density operator $\rho$ on $\mathcal{H}$, we can replace it by a purification $|\psi\rangle$ on $\mathcal{H}\otimes\mathcal{H}'$ for some $\mathcal{H}'$, while replacing the observables $a$ and $b$ by $a\otimes\mathbbm{1}$ and $b\otimes\mathbbm{1}$. This retains all joint outcome probabilities.

When working in the Heisenberg picture, the unitary evolution of the state is trivial, while Bob's observables evolve according to
\beq
b'\equiv e^{-iH(t_B-t_A)}b\, e^{iH(t_B-t_A)}.
\eeq
Since the observable $b$ was arbitrary, the evolved observable $b'$ is also just an arbitrary $\pm 1$-valued observable on $\mathcal{H}$. Hence as far as the existence of quantum-mechanical models for joint probabilities is concerned, the dynamics is irrelevant. In particular, we will choose $H=0$ for simplicity, so that $b'=b$. Then wavefunction collapse is the only ``dynamics'' present in our formalism.

\paragraph{Joint probabilities and correlators.} Now we calculate the joint probabilities in terms of $a$, $b$ and $|\psi\rangle$. For the $\pm 1$-valued observable $a$, the projection operator onto the $+1$-eigenspace and the projection operator onto the $-1$-eigenspace are given by, respectively,
\beq
\frac{\mathbbm{1}+a}{2},\qquad \frac{\mathbbm{1}-a}{2}.
\eeq
and in the same way for $b$. Using the Born rule together with the projection postulate shows that the joint probability for Alice to get the outcome $r\in\{-1,+1\}$ and for Bob to get the outcome $s\in\{-1,+1\}$ reads as
\begin{align}
\begin{split}
\label{jointprobs}
P(r,s)&=\left\langle\psi\left|\frac{\mathbbm{1}+ra}{2}\cdot\frac{\mathbbm{1}+sb}{2}\cdot\frac{\mathbbm{1}+ra}{2}\right|\psi\right\rangle\\[1pc]
&=\tfrac{1}{4}+\tfrac{1}{4}r\langle\psi|a|\psi\rangle+\tfrac{1}{8}s\langle\psi|b|\psi\rangle+\tfrac{1}{8}rs\langle\psi|\{a,b\}|\psi\rangle+\tfrac{1}{8}s\langle\psi|aba|\psi\rangle.
\end{split}
\end{align}
In this expression, $\{\cdot,\cdot\}$ denotes the anticommutator of two operators. $P(r,s)$ is the probability that Alice observes the outcome $r$, multiplied by the probability that Bob gets the outcome $s$ upon measuring the state of the system after state collapse due to Alice's outcome being $r$.

We also consider \emph{correlators}, which are defined as
\begin{align}
\begin{split}
\label{correlation}
C&\equiv\sum_{r,s}rs\: P(r,s)\\
&=P(+1,+1)+P(-1,-1)-P(-1,+1)-P(+1,-1).
\end{split}
\end{align}
Using\eq{jointprobs}, the correlator can be expressed as
\beqn
\label{corrs}
C=\tfrac{1}{2}\langle\psi|\{a,b\}|\psi\rangle
\eeqn
which is intuitive since only the $rs$-term in equation\eq{jointprobs} suggests any kind of correlation between the outcomes. So strangely, even though our scenario has a clear temporal order, the correlators do not depend on who measures first! As far as we can see, this curious property does not generalize to observables with more than two outcomes or to scenarios with more than two parties. 

Note that when we use the term ``correlation'', we simply mean ``specification of joint outcome probabilities for all allowed choices of observables'', while the notion of ``correlator'' refers only to the quantity\eq{corrs}.

\paragraph{The CHSH scenario.}
\label{CHSHpar}
In the CHSH scenario, Alice and Bob both have an independent choice between two observables. While Alice can select either the observable $a_1$ or the observable $a_2$, Bob has the freedom to measure either $b_1$ or $b_2$. For each of the resulting four choices, we obtain a distribution of joint probabilities of the form\eq{jointprobs}. We will use the notation
\beqn
\label{probs}
P(r,s|k,l)
\eeqn
to denote the probability that Alice gets the outcome $r\in\{-1,+1\}$ and Bob gets the outcome $s\in\{-1,+1\}$, given that Alice measures $a_k$ and Bob measures $b_l$. Finally, we will use the notation $C_{kl}$ for the correlator between $a_k$ and $b_l$.

As announced in paragraph~\ref{stage}, it will now be proven that any set of probabilities\eq{probs} has a quantum-mechanical representation in terms of generalized measurements (L\"uders measurements) for Alice, under the assumption that these probabilities satisfy causality in the sense that there is no backward signaling from Bob to Alice. This intuitive condition means that the joint probabilities can be factorized as
\beqn
\label{factor}
P(r,s|k,l)=P_B(s|r;k,l)P_A(r|k)
\eeqn
where $P_A(s|k)$ designates the outcome probabilities for Alice's measurement alone, and these are assumed to be independent of Bob's data $l$ and $s$. On the other hand, Bob's conditional outcome probabilities $P_B(s|r;k,l)$ may well depend on Alice's data in an arbitrary way. Condition\eq{factor} is necessary for the existence of a representation via generalized measurements, since the product representation\eq{factor} is essentially how one would typically calculate the joint probabilities starting from the quantum-mechanical data: first determine Alice's outcome probabilities $P_A(r|k)$ given the initial state $|\psi\rangle$, then calculate Bob's outcome probabilities $P_B(s|r;k,l)$, and finally multiply these two probabilities to obtain the desired result. Bob's probabilities $P_B(s|r;k,l)$ depend on the system's quantum state after Alice's measurement, and this state in turn is determined by $k$, $r$ and $|\psi\rangle$.

Conversely, in order to find a quantum-mechanical representation for an arbitrary such set of probabilities, consider a five-dimensional Hilbert space with orthonormal basis $\{|0\rangle,|1^+\rangle,|1^-\rangle,|2^+\rangle,|2^-\rangle\}$. We take the initial state of the system to be $|\psi\rangle=|0\rangle$. There exist generalized measurements such that the state after Alice's measurement is $|1^+\rangle$ if she measured $a_1$ and obtained a $+1$ outcome, and it is $|1^-\rangle$ if she measured $a_1$ and obtained a $-1$ outcome, and similarly for $|2^+\rangle$ and $|2^-\rangle$. Concretely, one can implement such measurements for example by using the Kraus operators
\beq
V_k^r\equiv \sqrt{P_A(r|k)}\:\left|k^r\rangle\langle 0\right|+\sum_{k',r'}\tfrac{1}{\sqrt{2}}\:\big|k'^{r'}\big\rangle\big\langle k'^{r'}\big|,\quad r\in\{-1,+1\}
\eeq
as describing the measurement of $a_k$. The first term guarantees that the post-measurement state of $V_k^r$ is the desired $|k^r\rangle$ and that the given measurement statistics are reproduced, both on the initial state $|\psi\rangle=|0\rangle$. (The other terms are merely needed for satisfaction of the completeness relation $\sum_{r}V_k^{r\dagger}V_k^r=\mathbbm{1}$.) For Bob, we can choose the two POVMs $\left\{\Pi_1^+,\Pi_1^-\right\}$, $\left\{\Pi_2^+,\Pi_2^-\right\}$ with
\beq
\Pi_l^s\equiv\mathrm{diag}\left(\tfrac{1}{2},\:P(s|+1;1,l),\:P(s|-1;1,l),\:P(s|+1;2,l),\:P(s|-1;2,l)\right)\\
\eeq
as representing the measurements $b_1$ and $b_2$; since Bob's post-measurement state does not get observed, we do not have to specify any Kraus operators implementing these POVMs. By construction, these POVMs reproduce the desired outcome probabilities $P_B(s|r;k,l)$ on the corresponding states $|k^r\rangle\in\{|1^+\rangle, |1^-\rangle, |2^+\rangle, |2^-\rangle\}$. This ends the construction of a quantum-mechanical model with generalized measurements for\eq{factor}. Some final remarks: since neither the initial state nor any post-measurement state is a superposition of basis states, this construction effectively yields a classical stochastic system. The trick in the construction is that Alice's post-measurement state keeps track of both her measurement setting and her outcome. This conditional state collapse to mutually orthogonal states would not be possible if we would only allow projective measurements for Alice.

\paragraph{Temporal hidden variable models.}

Using the assumption of what they called ``macroscopic realism'' and ``non-invasiveness'', Leggett and Garg~\cite{LG} derived an inequality satisfied by temporal correlations in hidden variable models which is violated by certain temporal quantum correlations. Macroscopic realism is the assumption that the system is, at each instant in time, definitely situated in one of several distinct states. This system state determines all measurement outcomes exactly; in this sense, all observables possess preexisting definite values. This is thought to apply to macroscopic objects in particular, hence the name ``macroscopic realism'', or more succinctly ``macrorealism''.

The crucial assumption now is non-invasiveness: this postulates that a measurement does not disturb the state of the system. There is an additional hidden assumption which has been made explicit and dubbed ``induction'' by Leggett~\cite{Real}: it is understood that the state of the system at time $t$ is sufficient information to calculate the outcomes of all future measurements. (In other words, causality only propagates forward in time.) All of these assumptions seem rather natural when dealing with macroscopic systems. In a manner analogous to the spatical case, one can now use these premises to derive (see~\cite{Real}, compare~\cite{Bru}) the temporal CHSH inequality:
\beqn
\label{CHSH}
S_{\textrm{CHSH}}\equiv C_{11}+C_{12}+C_{21}-C_{22}\leq 2.
\eeqn
On the other hand, it is known that this inequality can be violated by certain quantum correlations~\cite{Bru}. This is an exciting area due to promising prospects of using such results for testing the applicability of quantum theory in the macroscopic domain~\cite{exp}.

We will get back to hidden variable models in section~\ref{hardysection}.

\paragraph{Comparison to the spatial scenario.}
In general, the non-invasiveness assumption for hidden variable models is the exact analogue of locality in the spatial case. In both cases, the distribution of joint measurement outcomes is a probabilistic combination (i.e. a convex combination) of a collection of realistic models; a realistic model in turn is described by a hidden variable $\lambda$, constant over space and time, which determines all the outcomes of all possible measurements in a definite way. Therefore, there is absolutely no difference between local hidden variable models in spatial scenarios, and non-invasive hidden variable models in temporal scenarios. 

So the reason that one considers inequalities characterizing hidden variable models for temporal scenarios which are different from those in the spatial case is not that the hidden variable models are different --- they are the same. The reason is that the quantum-mechanical correlations are very different and strongly depend on whether one considers a spatial scenario or a temporal scenario. Although the Leggett-Garg inequality is perfectly valid as a spatial Bell inequality in a three-party scenario, it is not interesting in this case: since there is only one observable per party, no quantum violations are possible, and likewise no violations by more general no-signaling theories.

Let us also note that any set of joint outcome probabilities for a spatial Bell test can also appear in the temporal scenario. Mathematically, this follows from the fact that we recover exactly the spatial joint probabilities by taking $a$ and $b$ in\eq{jointprobs} to operate on separate tensor factors. Physically, this is clear since we can just think of Alice's and Bob's spatially separated quantum systems as a single quantum system, and then simply imagine that Alice conducts her measurement first, with Bob's measurement operating at a later time.

To end the comparison with spatial scenarios, let us recast\eq{corrs} in the following form:

\begin{prop}
While a spatial correlator is given by the expectation value of the tensor products of the observables, a temporal correlator is given by half the expectation value of the anticommutator of the observables:
\beq
\textrm{spatial: }\: C=\langle\psi|a\otimes b|\psi\rangle \quad\longrightarrow\quad \textrm{temporal: }\: C=\tfrac{1}{2}\langle\psi|\{a,b\}|\psi\rangle
\eeq
\end{prop}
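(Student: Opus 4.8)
The plan is to handle the two formulas separately, since the temporal one is already established. The temporal identity $C=\tfrac{1}{2}\langle\psi|\{a,b\}|\psi\rangle$ is precisely equation\eq{corrs}, obtained earlier by inserting the joint probabilities\eq{jointprobs} into the definition\eq{correlation}; so for the temporal half there is nothing left to do. For the spatial half I would begin from the standard spatial joint-probability expression, in which Alice's and Bob's observables act on separate tensor factors,
\beq
P(r,s)=\left\langle\psi\left|\tfrac{\mathbbm{1}+ra}{2}\otimes\tfrac{\mathbbm{1}+sb}{2}\right|\psi\right\rangle,
\eeq
expand the product of projectors, and substitute into\eq{correlation} to compute $C=\sum_{r,s}rs\,P(r,s)$ term by term. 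The only facts required are the elementary sums $\sum_{r}r=0$ and $\sum_{r}r^2=2$ over $r\in\{-1,+1\}$.

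The key observation, which carries essentially the whole computation, is that weighting by $rs$ and summing over both signs annihilates every term except the one already proportional to $rs$, namely the coefficient of $a\otimes b$. The constant term and the two terms linear in a single sign (proportional to $r$ alone or to $s$ alone) each carry an unpaired factor that sums to zero, whereas the $a\otimes b$ term survives with weight $\sum_{r}r^2\sum_{s}s^2=4$, exactly cancelling the prefactor $\tfrac{1}{4}$ coming from the two halved projectors. This leaves $C=\langle\psi|a\otimes b|\psi\rangle$, as claimed.

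I do not expect any genuine obstacle: the proposition is really a bookkeeping comparison rather than a theorem with content to be overcome. The one conceptual point worth making explicit is why the same sign-summation produces an anticommutator rather than a bare product in the temporal case. In\eq{jointprobs} the nested projector sandwich generates an additional term proportional to $s\langle\psi|aba|\psi\rangle$ that is absent spatially; but this term is odd in $r$, so it cancels in the correlator sum, and only the symmetric combination $\tfrac{1}{2}\langle\psi|\{a,b\}|\psi\rangle$ remains. Drawing out this parallel is, I take it, the actual purpose of isolating the statement as a proposition.
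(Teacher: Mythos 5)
Your proposal is correct and follows essentially the same route as the paper: the paper presents this proposition merely as a recasting of equation\eq{corrs}, whose derivation is exactly the sign-weighted sum of\eq{jointprobs} that you describe, with the spatial formula taken as the standard tensor-product computation. The only nitpick is a wording slip at the end: the $s\langle\psi|aba|\psi\rangle$ term is not itself odd in $r$; rather its contribution to $\sum_{r,s}rs\,P(r,s)$ carries an unpaired factor of $r$ that sums to zero, and the anticommutator is already present as the $rs$-coefficient of\eq{jointprobs} (arising from the projector sandwich $ab+ba$) rather than being produced by the sign summation.
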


\paragraph{The qubit case and beyond.}

As a first example of temporal quantum correlations, we consider a single qubit in the Bloch sphere picture. This case has also been treated in~\cite{Bru}.

Let the system have an initial state given in terms of the Bloch vector $\vec v$. A dichotomic observable is described by a unit vector $\vec{a}\in\R^3$, such that the probability for getting the outcome $r\in\{-1,+1\}$ on the state $\vec{v}$ is given by
\beqn
\label{qubitAlice}
\tfrac{1}{2}(1+r\,\vec{a}\cdot\vec{v})
\eeqn
And in case that the outcome $r$ has been observed, the state has collapsed to $r\,\vec{a}$.

The dynamics of the qubit between $t_1$ and $t_2$ in this representation is specified by a rotation matrix $R\in SO(3)$, such that the state prior to Bob's measurement is $R(r\,\vec{a})=r\,R(\vec{a})$. Then given that Alice obtained the outcome $r$, the probability for Bob to get the outcome $s$ is consequently
\beqn
\label{qubitBob}
\tfrac{1}{2}(1+rs\,\vec{b}\cdot R(\vec{a})).
\eeqn
After multiplying the two expressions\eq{qubitAlice} and\eq{qubitBob} to get the joint probability and summing over $r$ and $s$ with the appropriate sign, the correlator explicitly reads according to the definition\eq{correlation}
\beqn
\begin{split}
C & = \tfrac{1}{2}(1 + \vec a \cdot \vec v) \tfrac{1}{2}(1 + \vec b \cdot R(\vec a)) \\
& + \tfrac{1}{2}(1 - \vec a \cdot \vec v) \tfrac{1}{2}(1 + \vec b \cdot R(\vec a)) \\
& - \tfrac{1}{2}(1 + \vec a \cdot \vec v) \tfrac{1}{2}(1 - \vec b \cdot R(\vec a)) \\
& - \tfrac{1}{2}(1 - \vec a \cdot \vec v) \tfrac{1}{2}(1 - \vec b \cdot R(\vec a)) \\
& = R(\vec a) \cdot \vec b.
\end{split}
\eeqn
So remarkably, this correlator does not depend on the initial state,
which is due to the collapse after Alice has measured, and the structure of the correlator as a particular linear combination of joint probabilities.
This correlator is very similar to the correlator known from maximally entangled two-qubit states
and therefore we can now find the maximal qubit value using simple techniques.
The CHSH quantity then reads:
\begin{align*}
S_{\textrm{CHSH}}^{\textrm{qubit}} &= C_{11}+C_{12}+C_{21}-C_{22}\\
&= R(\vec a_1) \cdot (\vec b_1 + \vec b_2) + R(\vec a_2) \cdot (\vec b_1 - \vec b_2)
\end{align*}
For finding its maximum, note that since the vectors $\vec b$ are normalized, the vectors in the brackets are orthogonal.
Moreover, $|\vec b_1 + \vec b_2|^2 + |\vec b_1 - \vec b_2|^2 = 4$ and so we can introduce
two new orthogonal normalized vectors $\vec b_+$ and $\vec b_-$ such that
$\vec b_1 + \vec b_2 = 2 \cos \alpha \, b_+$ and $\vec b_1 - \vec b_2 = 2 \sin \alpha \, b_-$ for some angle $\alpha$.
Plugging this into the expression for $S_{\textrm{qubit}}$ and optimizing over the $R(\vec{a}_i)$, which are also normalized vectors, yields the Tsirelson bound of $2\sqrt{2}$, which is therefore the maximal value achievable with a qubit. In particular, this violates the bound\eq{CHSH}, confirming that quantum theory cannot be equivalent to a probabilistic hidden variable theory with preexisting values for all observables and repeatable measurements.

All the concrete examples of temporal quantum correlations which we will consider in the following sections are modelled on qubits. So here let us quickly demonstrate that not all quantum correlations in the temporal CHSH scenario can arise from qubit data. Consider a qutrit system with orthonormal basis $\{|0\rangle,|1\rangle,|2\rangle\}$, and the following prescriptions:
\begin{itemize}
\item the initial state $|\psi\rangle=|0\rangle$,
\item $a_1$ measures if the system is in the state $|0\rangle+|1\rangle$,
\item $a_2$ measures if the system is in the state $|0\rangle+|2\rangle$,
\item $b_1$ measures if the system is in the state $|2\rangle$,
\item $b_2$ is any dichotomic observable.
\end{itemize}
This system has the following properties: Alice's outcomes both have probability $1/2$, independent of whether she chooses $a_1$ or $a_2$. But her choice drastically affects Bob's prospects upon measuring $b_1$: when Alice chooses $a_1$, he will definitely observe a $-1$ outcome; however when Alice chooses $a_2$, his outcome will be uniformly random and independent of hers. Such behavior is impossible in a qubit system: one would necessarily need to have $b_1=-\mathbbm{1}$, otherwise Bob's outcome could not be definite after Alice's non-trivial measurement of $a_1$. But then obviously his outcome would also have to be a definite $-1$ when Alice measured $a_2$, which it is not allowed to be. It would be interesting to try and turn this into a dimension witness in the sense of~\cite{DW}.

\section{Correlator space and the Tsirelson bound}

We may ask whether the temporal correlators satisfy the Tsirelson bound generally, or whether this just holds for the case of a qubit system. From the qubit case we know that the Tsirelson bound can be attained; but a priori, some temporal quantum correlations may in principle be so strong that even the Tsirelson inequality
\beqn
\label{tsirelson_bound}
S_{\textrm{CHSH}}=C_{11}+C_{12}+C_{21}-C_{22}\leq 2\sqrt{2}
\eeqn
is violated.

What we mean here by \emph{correlator space} is the set of quadruples
\beq
(C_{11},C_{12},C_{21},C_{22})
\eeq
which can appear as correlators between Alice's and Bob's measurements in a quantum-mechanical world. Recall that the correlators are defined as
\beqn
\label{correlation2}
C_{kl}\equiv \sum_{r,s\in\{-1,+1\}}rs\,P(r,s|k,l)
\eeqn
so that there is a linear map from probability space down to correlator space. Obviously, taking the projection of a point from probability space down to correlator space throws away some data, so specifying the four correlators is not sufficient for knowing the full set of joint probabilities. Yet the correlators contain precious information about the system, for example the maximal violation of the CHSH inequality, and they are also related to the possibility of producing PR-box behavior (see section~\ref{PRboxsection}).

For the remaining part of this section, we will consider the scenario in which Alice has a choice between $m\in\N$ dichotomic observables, while Bob has a choice betweeen $n\in\N$ dichotomic observables. Even in this generality, it is not hard to use the techniques of Tsirelson for showing that, in correlator space, the temporal quantum region coincides with the spatial quantum region. Tsirelson has proven in his paper~\cite{Tsi} that the following three statements are equivalent, for any given matrix of correlators $\left(C_{kl}\right)_{k=1,\ldots,m}^{l=1,\ldots,n}$:
\begin{enumerate}
\item There exists a $C^*$-algebra $\mathcal{A}$ with identity, hermitian elements $a_1,\ldots,a_m,b_1,\ldots,b_n$ and a state $f$ on $\mathcal{A}$ such that for any $k, l$, we have 
\beq
a_kb_l=b_la_k,
\eeq\beq
-\mathbbm{1}\leq a_k\leq\mathbbm{1};\qquad -\mathbbm{1}\leq b_l\leq\mathbbm{1},
\eeq\beq
f(a_kb_l)=C_{kl}.
\eeq
\item There exist Hilbert spaces $\mathcal{H}_a$ and $\mathcal{H}_b$ together with Hermitian operators $a_1,\ldots,a_m\in\mathcal{B}(\mathcal{H}_a)$, $b_1,\ldots,b_n\in\mathcal{B}(\mathcal{H}_b)$ and a density matrix $\rho$ on $\mathcal{H}_a\otimes\mathcal{H}_b$ such that
\beq
a_k^2=\mathbbm{1};\qquad b_l^2=\mathbbm{1}
\eeq\beq
\mathrm{tr}\left(\rho(a_k\otimes b_l)\right)=C_{kl}
\eeq
\item In the Euclidean space of dimension $\min(m,n)$, there exist vectors $x_1,\ldots,x_m,y_1,\ldots,y_n$ such that
\beq
|x_k|\leq 1;\qquad |y_l|\leq 1;\qquad \langle x_k,y_l\rangle=C_{kl}\quad\forall k,l
\eeq
\end{enumerate}

\begin{prop}
These conditions are also equivalent to the following two:
\begin{enumerate}
\item[(a')] There exists a $C^*$-algebra $\mathcal{A}$ with identity, hermitian elements $a_1,\ldots,a_m,b_1,\ldots,b_n$ and a state $f$ on $\mathcal{A}$ such that for any $k, l$ we have
\beq
-\mathbbm{1}\leq a_k\leq\mathbbm{1};\qquad -\mathbbm{1}\leq b_l\leq\mathbbm{1}
\eeq\beq
f\left(\tfrac{1}{2}\left\{a_k,b_l\right\}\right)=C_{kl}
\eeq
\item[(b')] There exists a Hilbert space $\mathcal{H}$ together with Hermitian operators $a_1,\ldots,a_m,b_1,\ldots,b_n\in\mathcal{B}(\mathcal{H})$ and a density matrix $\rho$ on $\mathcal{H}$ such that
\beq
a_k^2=\mathbbm{1};\qquad b_l^2=\mathbbm{1}
\eeq\beq
\mathrm{tr}\left(\rho\cdot\tfrac{1}{2}\left\{a_k,b_l\right\}\right)=C_{kl}
\eeq
\end{enumerate}
\end{prop}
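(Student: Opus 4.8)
The plan is to enlarge Tsirelson's existing equivalence cycle by splicing (a') and (b') into it, using only a single genuinely new argument together with three essentially immediate observations. Since conditions (1), (2), (3) are already known to be equivalent, it suffices to establish the four implications (1) $\Rightarrow$ (a'), (a') $\Rightarrow$ (3), (2) $\Rightarrow$ (b'), and (b') $\Rightarrow$ (a'). Chasing these around, together with Tsirelson's equivalences, closes the loop and places both new conditions in the same class: (a') implies (3) hence (1) hence back to (a'), while (b') implies (a') hence (2) hence back to (b').

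Three of these implications are nearly trivial and rest on the same remark. In Tsirelson's settings (1) and (2) the $a$'s and $b$'s commute --- in (2) because $a_k=a_k\otimes\mathbbm{1}$ and $b_l=\mathbbm{1}\otimes b_l$ act on different tensor factors --- so the anticommutator collapses to $\tfrac{1}{2}\{a_k,b_l\}=a_kb_l$. Hence the correlator condition on the product is literally the same as the correlator condition on the symmetrized product, and the norm bounds carry over verbatim; in particular $a_k^2=\mathbbm{1}$ forces $-\mathbbm{1}\leq a_k\leq\mathbbm{1}$, which is exactly what (a') demands. This gives (1) $\Rightarrow$ (a') and (2) $\Rightarrow$ (b'). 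For (b') $\Rightarrow$ (a'), I would simply take $\mathcal{A}=\mathcal{B}(\mathcal{H})$ with the state $f(\cdot)=\mathrm{tr}(\rho\,\cdot)$; once more $a_k^2=\mathbbm{1}$ supplies the norm bounds, and the trace expression is by definition $f(\tfrac{1}{2}\{a_k,b_l\})$.

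The one substantive step is (a') $\Rightarrow$ (3), and the observation that drives it --- the part I expect to be the main obstacle, or rather the crux insight --- is that the anticommutator expectation is precisely the \emph{real part} of a GNS inner product. Concretely, I would apply the GNS construction to $(\mathcal{A},f)$ to obtain a complex Hilbert space with cyclic vector $\Omega$ satisfying $f(x)=\langle\Omega|x|\Omega\rangle$, and set $x_k\equiv a_k\Omega$ and $y_l\equiv b_l\Omega$. Using hermiticity and the state identity $f(x^*)=\overline{f(x)}$ (so that $f((a_kb_l)^*)=f(b_la_k)=\overline{f(a_kb_l)}$), one finds $\langle x_k,y_l\rangle=f(a_kb_l)$ and therefore $\mathrm{Re}\,\langle x_k,y_l\rangle=\tfrac{1}{2}\big(f(a_kb_l)+f(b_la_k)\big)=f(\tfrac{1}{2}\{a_k,b_l\})=C_{kl}$. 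Passing to the realification of the GNS space, whose real inner product is by definition $\mathrm{Re}\,\langle\cdot,\cdot\rangle$, turns this into an honest real inner product equality $\langle x_k,y_l\rangle=C_{kl}$, while the norms are unaffected: $|x_k|^2=f(a_k^2)\leq f(\mathbbm{1})=1$ because $a_k^2\leq\mathbbm{1}$, and likewise $|y_l|\leq 1$.

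This produces vectors realizing the $C_{kl}$ as real inner products in the finite-dimensional real space spanned by the finitely many $x_k$ and $y_l$, which is condition (3) except for the ambient dimension. The genuine difficulty is the realification insight above; the remaining reduction from an arbitrary finite dimension down to $\min(m,n)$ is exactly the dimension-counting contained in Tsirelson's own proof of (3), so I would invoke that rather than reprove it. With (a') $\Rightarrow$ (3) established, all four implications hold, the cycle closes, and the equivalence of (a') and (b') with Tsirelson's three conditions follows.
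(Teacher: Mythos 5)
Your proposal is correct and follows essentially the same route as the paper: the easy implications (Tsirelson's tensor-product condition gives (b') by forgetting the factorization, and (b') gives (a') via $\mathcal{A}=\mathcal{B}(\mathcal{H})$, $f=\mathrm{tr}(\rho\,\cdot)$) are identical, and your key step (a')$\Rightarrow$(3) is the paper's argument in GNS clothing --- the paper directly equips the real span of the $a_k,b_l$ with the possibly degenerate form $\langle x,y\rangle\equiv f(\tfrac{1}{2}\{x,y^*\})=\mathrm{Re}\,f(y^*x)$ and quotients out the null space, which is precisely the realified GNS inner product restricted to the vectors $a_k\Omega,b_l\Omega$ you construct. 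Both versions defer the reduction to dimension $\min(m,n)$ to Tsirelson's original argument, so there is no substantive difference.
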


\begin{proof}
We first show that (b)$\Rightarrow$(b'). Given the data as in (b), it is clear that they also satisfy (b') if we take $\mathcal{H}=\mathcal{H}_a\otimes\mathcal{H}_b$ and rename $a_k\otimes\mathbbm{1}$ to $a_k$ and $\mathbbm{1}\otimes b_l$ to $b_l$.

The implication (b')$\Rightarrow$(a') easily follows by choosing $\mathcal{A}=\mathcal{B}(\mathcal{H})$, and $f(x)\equiv\mathrm{tr}(\rho\cdot x)$.

To close the circle of implications, we will now check that (a')$\Rightarrow$(c). But this works in exactly the same way as Tsirelson's own proof~\cite{Tsi} that (a)$\Rightarrow$(c): start with the finite-dimensional vector space defined to be the $\R$-linear span of the $a_k$ and the $b_l$. This vector space carries an inner product, possibly degenerate, which is defined as
\beq
\langle x,y\rangle\equiv f\left(\tfrac{1}{2}\left\{x,y^*\right\}\right)=\mathrm{Re}\:f(y^*x)
\eeq
After quotiening out the null space, this inner product becomes positive definite and produces a Euclidean space such that
\beq
|a_k|^2=\langle a_k,a_k\rangle=f(a_k^2)\leq 1,\qquad |b_l|^2=\langle b_l,b_l\rangle=f(b_l^2)\leq 1,\qquad \langle a_k,b_l\rangle=C_{kl}
\eeq
as required. Now just as in~\cite{Tsi}, all the requirements of (c) are satisfied, except that the dimension of the space has to be at most $\min(m,n)$. This can also be easily achieved by orthogonal projection of the vectors $x_k\equiv a_k$ onto the subspace spanned by the vectors $y_l\equiv b_l$, or in the other way around.
\end{proof}

By\eq{corrs}, we have therefore proven the following result:

\begin{thm}
\label{spatiotemporal}
A matrix of correlators $\left(C_{kl}\right)_{k=1,\ldots,m}^{l=1,\ldots,n}$ can appear as temporal correlations between dichotomic projective measurements on the same system if and only if it can appear as spatial correlations between dichotomic measurements on two spatially separated entangled systems.
\end{thm}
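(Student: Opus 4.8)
The plan is to read the theorem off directly from the preceding Proposition together with equation\eq{corrs}: the two directions of the ``if and only if'' correspond exactly to the equivalence of conditions (b) and (b'). So the first and main step is to set up a faithful dictionary between the two physical scenarios and these two algebraic conditions, after which the theorem is immediate.

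For the temporal side, I would argue that an $m\times n$ matrix of correlators arises from temporal projective measurements on a single system precisely when condition (b') holds. Indeed, dichotomic (i.e. $\pm 1$-valued) projective measurements are exactly the hermitian operators with $a_k^2=b_l^2=\mathbbm{1}$, and by\eq{corrs} the temporal correlator equals $\tfrac{1}{2}\langle\psi|\{a_k,b_l\}|\psi\rangle$. Taking $\rho=|\psi\rangle\langle\psi|$ shows that every temporal correlator matrix satisfies (b'); conversely, any $\rho$ occurring in (b') is realized as a (possibly mixed) initial state, and the purification remark from paragraph~\ref{stage} reduces the mixed case to the pure case without altering the correlators. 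Hence temporal realizability of $(C_{kl})$ is equivalent to (b').

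For the spatial side, the standard CHSH scenario with two spacelike separated entangled systems is, essentially by definition, condition (b): one has observables $a_k\in\mathcal{B}(\mathcal{H}_a)$ and $b_l\in\mathcal{B}(\mathcal{H}_b)$ acting on separate tensor factors, a shared density matrix $\rho$ on $\mathcal{H}_a\otimes\mathcal{H}_b$, and spatial correlators $\mathrm{tr}(\rho(a_k\otimes b_l))=C_{kl}$. This is precisely Tsirelson's formulation, so spatial realizability of $(C_{kl})$ is equivalent to (b). It then remains only to invoke the Proposition: since (a), (b), (c) and (a'), (b') are all mutually equivalent, in particular (b)$\Leftrightarrow$(b'), and therefore $(C_{kl})$ is temporally realizable if and only if it is spatially realizable.

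The substantive mathematical content---closing Tsirelson's circle of equivalences through the new anticommutator conditions---has already been discharged in the Proposition, so no estimation or optimization is needed here. The only point demanding care, and hence the main obstacle I would scrutinize, is the faithfulness of the physics-to-algebra dictionary: namely that restricting Alice to projective rather than general L\"uders measurements is exactly what pins the temporal correlator to the symmetric form $\tfrac{1}{2}\{a_k,b_l\}$ of (b') (otherwise the asymmetric Kraus-operator expression of paragraph~\ref{CHSHpar} would intervene), and that the no-loss-of-generality claims---pure initial state via purification, and Bob's POVM absorbed into a projective measurement on an enlarged Hilbert space---genuinely leave all correlators unchanged. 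Once that translation is secured, the theorem follows with no further work.
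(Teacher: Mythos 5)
Your proposal is correct and follows exactly the paper's route: the paper derives the theorem in one line from equation\eq{corrs} together with the Proposition's equivalence of (b) and (b'), which is precisely the dictionary you set up. The extra care you take over purification and the projective-measurement assumption for Alice is implicit in the paper's paragraph~\ref{stage} but is worth making explicit.
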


In particular, this implies that the Tsirelson bound\eq{tsirelson_bound} is indeed generally valid in our temporal setting.

\section{Impossibility of PR-box correlations}
\label{PRboxsection}

We say that a \emph{PR-box correlation} is a set of joint probabilities $P(r,s|k,l)$ which has the property that the outcomes $r$ and $s$ are equal if and only if $k=l=2$. This property is equivalent to the requirement that the four correlations\eq{corrs} are given by
\beqn
\label{PRbox}
C_{11}=C_{12}=C_{21}=-1,\qquad C_{22}=+1.
\eeqn
Correlations of this form could be used e.g. to achieve optimal better-than-quantum performance in two-party XOR games (see e.g.~\cite{CHTW}). When the joint probabilities $P(r,s|k,l)$ are assumed to be no-signaling, then this requirement actually fixes all values for the probabilities uniquely; however this does not apply here as our temporal scenario allows signaling from Alice to Bob.

Starting from\eq{corrs}, we now determine when a correlator $C_{kl}$ can have a value of $\pm 1$,
\beq
C_{kl}=\tfrac{1}{2}\langle\psi|\{a_k,b_l\}|\psi\rangle\stackrel{!}{=}\pm 1,
\eeq
which is equivalent to
\beq
\langle\psi|a_kb_l|\psi\rangle+\langle\psi|b_la_k|\psi\rangle=\pm 2.
\eeq
But now since the absolute value of each term is $\leq\! 1$, and becomes $1$ if and only if $|\psi\rangle$ is an eigenstate of the respective operator, it follows that PR-box behavior requires $|\psi\rangle$ to be an eigenstate of the following form:
\beq
b_ka_l|\psi\rangle=a_kb_l|\psi\rangle=(-1)^{(k-1)(l-1)}|\psi\rangle
\eeq
But these equations imply
\beq
\langle\psi|\psi\rangle=\langle\psi|a_1b_1b_1a_2|\psi\rangle=\langle\psi|a_1a_2|\psi\rangle=\langle\psi|a_1b_2b_2a_2|\psi\rangle=-\langle\psi|\psi\rangle
\eeq
which is impossible for any $|\psi\rangle\neq 0$. Therefore, PR-box behavior is impossible even for the temporal quantum correlations which we consider here. We could also have concluded this from theorem~\ref{spatiotemporal}.

\section{Strength of signaling}

In our Bell-test scenario, the backward no-signaling equations
\beq
P(r,-1|k,1)+P(r,+1|k,1)=P(r,-1|k,2)+P(r,+1|k,2)\qquad \forall r,k
\eeq
are still true: the marginal probability governing Alice's measurement cannot possibly depend on the measurement setting of Bob. However the forward no-signaling equations
\beqn
\label{nosig}
P(-1,s|1,l)+P(+1,s|1,l)=P(-1,s|2,l)+P(+1,s|2,l)\qquad \forall s,l
\eeqn
are typically violated, since the choice of measurement for Alice influences the system state after her measurement, and therefore changes the outcome probabilities for Bob. Effectively, what Bob sees is not exactly the initial state $|\psi\rangle$, but $|\psi\rangle$ after undergoing decoherence due to Alice's measurement. It is an interesting question to ask how much the no-signaling equations\eq{nosig} can be violated by our quantum-mechanical setup. This is why we want to look at the deviations from\eq{nosig} and determine how large they can possibly be in a quantum theory. Since each of these four possible quantities involve only one fixed measurement setting $l$ of Bob, we will disregard Bob's choice for the rest of this section, and assume that he simply measures any dichotomic observable $b$. The joint probabilities we then consider are of the form $P(r,s|k)$. Then the two signaling quantities are
\beqn
\begin{split}
S_{+}\equiv& P(+1,+1|1)+P(-1,+1|1)-P(+1,+1|2)-P(-1,+1|2)\\
S_{-}\equiv& P(+1,-1|1)+P(-1,-1|1)-P(+1,-1|2)-P(-1,-1|2).
\end{split}
\eeqn
Due to the total outcome probability for each choice of measurement being $1$, it necessarily holds that $S_++S_-=0$, independent of whether the system is quantum or not. Therefore the interesting question now is, which values of $S_+$ are achievable by quantum mechanics? This is what we are going to answer here.

A priori, $S_+$ can be expected to attain all the values in the interval $[-1,+1]$. The extreme values of $-1$ and $+1$ correspond to perfect signaling in the sense that Bob can definitely tell which measurement Alice had chosen. This can be interpreted as a classical communication channel with a capacity of $1$ bit.

\begin{thm}
A signaling level $S_{+}\in [-1,+1]$ is quantum-mechanical if and only if $|S_+|\leq\frac{1}{2}$.
\end{thm}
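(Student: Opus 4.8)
The plan is to reduce the signaling quantity $S_+$ to the expectation value of a single Hermitian operator, after which the upper bound is an immediate norm estimate and achievability becomes a short Bloch-sphere computation. First I would compute Bob's marginal probability of obtaining $+1$ given that Alice measured $a_k$. Summing\eq{jointprobs} over Alice's outcome $r$ and using the dephasing identity $\sum_{r}\frac{\mathbbm{1}+ra_k}{2}\,Y\,\frac{\mathbbm{1}+ra_k}{2}=\tfrac{1}{2}(Y+a_kYa_k)$ with $Y=\tfrac{1}{2}(\mathbbm{1}+b)$ and $a_k^2=\mathbbm{1}$, this marginal equals $\tfrac{1}{2}+\tfrac{1}{4}\langle\psi|(b+a_kba_k)|\psi\rangle$. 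Subtracting the $k=2$ version from the $k=1$ version, the constant term and the $\langle\psi|b|\psi\rangle$ term cancel, leaving
\beq
S_+=\tfrac{1}{4}\langle\psi|(a_1ba_1-a_2ba_2)|\psi\rangle .
\eeq

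For the \emph{only if} direction I would then argue as follows. Each operator $a_kba_k$ is Hermitian and satisfies $(a_kba_k)^2=a_kb^2a_k=\mathbbm{1}$, so its spectrum is contained in $\{-1,+1\}$ and hence $-\mathbbm{1}\le a_kba_k\le\mathbbm{1}$. Therefore the difference $a_1ba_1-a_2ba_2$ has operator norm at most $2$, and $|S_+|\le\tfrac14\|a_1ba_1-a_2ba_2\|\le\tfrac12$. This already establishes that no signaling level with $|S_+|>\tfrac12$ is quantum-mechanical.

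For the \emph{if} direction I would exhibit, by continuity, the extreme values $\pm\tfrac12$ together with all intermediate ones, working throughout on a single qubit. Fixing $b$ and passing to the Bloch picture, conjugation $a_k(\cdot)a_k$ acts on the Bloch vector $\vec b$ as the $\pi$-rotation $R_k$ about the axis $\vec a_k$, so that $\langle\psi|a_kba_k|\psi\rangle=(R_k\vec b)\cdot\vec v$, where $\vec v$ is the Bloch vector of $|\psi\rangle$. As $\vec a_k$ ranges over all unit vectors the image $R_k\vec b$ ranges over the entire unit sphere (take $\vec a_k$ to bisect $\vec b$ and the desired target), so $S_+=\tfrac14(\vec u_1-\vec u_2)\cdot\vec v$ with $\vec u_1,\vec u_2,\vec v$ arbitrary unit vectors. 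Choosing $\vec u_1=-\vec u_2=\vec v$ yields $S_+=\tfrac12$, the opposite choice yields $-\tfrac12$, and a one-parameter family (for instance rotating $\vec u_1$ and $\vec u_2$ symmetrically away from $\pm\vec v$) sweeps out every value in between; combined with the bound, the achievable set is exactly $[-\tfrac12,\tfrac12]$.

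The main point requiring care is not the bound itself, which is a one-line estimate once $S_+$ has been rewritten operator-theoretically, but rather verifying that this rather crude triangle-inequality bound is actually tight. Saturation forces $|\psi\rangle$ to be a $+1$-eigenstate of $a_1ba_1$ and a $-1$-eigenstate of $a_2ba_2$, i.e.\ $a_1|\psi\rangle$ and $a_2|\psi\rangle$ must lie in the $+1$- and $-1$-eigenspaces of $b$ respectively; the qubit construction above realizes exactly this configuration, so I expect the bookkeeping of this explicit model to be the only genuinely delicate step.
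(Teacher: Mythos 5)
Your proof is correct and follows essentially the same route as the paper: rewrite $S_+$ as the expectation value of a Hermitian operator, bound it by a triangle/norm estimate, and attain $\pm\tfrac{1}{2}$ with an explicit qubit configuration plus convexity/continuity for the intermediate values. One remark worth recording: your intermediate formula $S_+=\tfrac{1}{4}\langle\psi|(a_1ba_1-a_2ba_2)|\psi\rangle$ disagrees with the paper's equation\eq{Ss}, which reads $S_+=\tfrac{1}{4}\langle\psi|b|\psi\rangle+\tfrac{1}{8}\langle\psi|(a_1ba_1-a_2ba_2)|\psi\rangle$; summing\eq{jointprobs} over $r$ shows that \emph{your} version is the correct one (the paper's would give $S_+\neq 0$ even when $a_1=a_2$), although both expressions happen to produce the same bound of $\tfrac{1}{2}$ and are saturated by the same protocol\eq{protocol}.
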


\begin{proof}
By\eq{jointprobs}, the signaling quantity $S_+$ can be expressed in terms of the observables and the initial state as
\beqn
\label{Ss}
S_{+}=\tfrac{1}{4}\langle\psi|(a_1ba_1-a_2ba_2)|\psi\rangle\\
\eeqn
where most terms have in fact dropped out. This equation implies
\beq
|S_+|\leq\tfrac{1}{4}|\langle\psi|a_1ba_1|\psi\rangle|+\tfrac{1}{4}|\langle\psi|a_2ba_2|\psi\rangle|
\eeq
Each term within the absolute value brackets in turn can be bounded by $1$, since it is the expectation value of a $\pm 1$-valued observable, so that the bound $|S_+|\leq 1/2$ follows.

Conversely, since the set of allowed for $S_+$ needs to be convex, it is sufficient to show that the values $+1/2$ and $-1/2$ can be attained. For attaining the value $+1/2$, we can choose
\beqn
\label{protocol}
|\psi\rangle=|x+\rangle,\qquad a_1=\sigma_x,\qquad a_2=\sigma_y,\qquad b=\sigma_x,
\eeqn
where a direct calculation shows that this indeed has the required property.
\end{proof}

As was already mentioned briefly, we may also consider the signaling strength in terms of the information which Bob's measurement outcome contains about Alice's choice of setting. This is encoded in the two probabilities
\begin{align}
\begin{split}
P(s|k)&\equiv P(-1,s|k)+P(+1,s|k)\\
&=\tfrac{1}{2}+\tfrac{1}{4}s\langle\psi|b|\psi\rangle+\tfrac{1}{4}s\langle\psi|a_kba_k|\psi\rangle
\end{split}
\end{align}
which define a classical communication channel on the input alphabet $k\in\{1,2\}$ to the output alphabet $s\in\{-1,+1\}$. Since Bob's outcome is only dichotomic, we can equivalently consider the expectation value of his measurement,
\beq
E(s|k)\equiv P(+1|k)-P(-1|k) = \tfrac{1}{2}\langle\psi|b|\psi\rangle+\tfrac{1}{2}\langle\psi|a_kba_k|\psi\rangle,\qquad k\in\{1,2\}
\eeq
and the question then is, which pairs $(E(s|1),E(s|2))$ can occur quantum-mechanically, and how does this bound the classical capacity by which Alice can use her measurements in order to send information to Bob? The answer to this question is given in the following theorem:

\begin{thm}
A pair $(E(s|1),E(s|2))$ can occur quantum-mechanically if and only if
\beq
|E(s|1)-E(s|2)|\leq 1.
\eeq
The maximal communication capacity is $\log_2\left(5/4\right)\approx 0.32\:\mathrm{bits}$, which can be achieved using the qubit protocol\eq{protocol}.
\end{thm}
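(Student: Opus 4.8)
The plan is to reduce the statement to a computation of Bob's conditional expectation, bound the relevant combination of the two such expectations, and then optimize a classical channel capacity. First I would compute $E(s|k)$ from first principles: averaging Bob's post-measurement expectation $\langle\psi_r|b|\psi_r\rangle$ over Alice's two outcomes $r$ with their Born weights and using $a_k^2=\mathbbm{1}$, the term linear in $r$ cancels and one obtains $E(s|k)=\tfrac12\langle\psi|(b+a_kba_k)|\psi\rangle$, in agreement with the displayed formula for $P(s|k)$. Thus $P(s|k)=\tfrac12\bigl(1+s\,E(s|k)\bigr)$ defines a binary-input binary-output channel, and each $E(s|k)$ is the expectation of the $\pm1$-valued observable $\tfrac12(b+a_kba_k)$, so automatically $|E(s|k)|\le1$.

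For the \emph{only if} direction I would bound the combination $E(s|1)+E(s|2)$ by writing it as a single expectation and invoking that a $\pm1$-valued observable has operator norm $1$, exactly the triangle-inequality device used in the proof of the preceding theorem for $S_+$. Applied naively this only yields $|E(s|1)+E(s|2)|\le 2$, and improving it to the sharp value $1$ is the crux. The key point is that $b$ enters both terms, conjugated by the involutions $a_k$, so $\langle b\rangle$ and $\langle a_kba_k\rangle$ cannot both be extremal independently; this is precisely the interdependence quantified by the previous theorem's bound $|S_+|\le\tfrac12$. Feeding that bound through the linear relation between $S_+$ and the expectations $\langle b\rangle,\langle a_kba_k\rangle$ yields the stated constraint $|E(s|1)+E(s|2)|\le1$.

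For the \emph{if} direction I would first note that allowing a mixed initial state (or forming direct sums) makes the achievable region convex, so it suffices to hit its extreme points. The qubit protocol\eq{protocol} realizes $(E(s|1),E(s|2))=(1,0)$, which lies on the boundary $|E(s|1)+E(s|2)|=1$; relabeling Bob's outcome via $b\mapsto -b$ and swapping Alice's two settings then generates the remaining extreme points, and their convex hull is the claimed region. For the capacity I would regard $P(s|k)$ as a channel with binary input $k$ and binary output $s$ and maximize $I(K;S)=H(S)-H(S|K)$ over the input law. At the extreme point $(1,0)$ this is a Z-channel—one setting gives a deterministic outcome while the other gives a uniform one—and differentiating $I$ in the input probability $p=P(k{=}1)$ gives the optimum $p=3/5$ and the value $\log_2(5/4)$.

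To see that no admissible channel does better, I would use that the capacity $C(W)=\max_p I(p;W)$ is convex in the channel $W$ and that $W$ is affine in $(E(s|1),E(s|2))$ over the admissible polytope; a convex function on a polytope is maximized at an extreme point, and among the extreme points produced above the Z-channel type attains $\log_2(5/4)$ while the others (equal output distributions for the two settings) give $0$. I expect the main obstacle to be the sharp constant in the only-if bound: the crude operator-norm estimate loses a factor of two, and extracting the tight value forces one to use the interdependence of $\langle b\rangle$ and $\langle a_kba_k\rangle$—equivalently, to route the argument through the signaling bound of the preceding theorem—rather than bounding the three expectations separately.
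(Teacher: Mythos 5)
Your proposal follows the same route as the paper's proof (reduce the sharp bound to the signaling theorem, hit the extreme points with the protocol\eq{protocol} and its relabelings, then use convexity of the channel capacity over the channel polytope), but the step you yourself identify as the crux --- upgrading the trivial bound $|E(s|1)+E(s|2)|\le 2$ to $1$ by ``feeding'' $|S_+|\le\tfrac12$ through a linear relation --- is asserted rather than carried out, and it cannot be carried out as described. From the marginal $P(s|k)=\tfrac12+\tfrac14 s\langle\psi|b|\psi\rangle+\tfrac14 s\langle\psi|a_kba_k|\psi\rangle$ one gets $E(s|k)=\tfrac12\langle\psi|(b+a_kba_k)|\psi\rangle$, as you say; but then $S_+=P(+1|1)-P(+1|2)=\tfrac14\langle\psi|(a_1ba_1-a_2ba_2)|\psi\rangle$, because the $\langle\psi|b|\psi\rangle$ terms cancel between Alice's two settings. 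So the linear relation that actually holds is $2S_+=E(s|1)-E(s|2)$: the signaling theorem bounds the \emph{difference}, not the sum. Indeed the sum is not bounded by $1$ at all: take $a_1=a_2=b$ and $|\psi\rangle$ a $+1$-eigenstate of $b$; Alice's measurement is then non-disturbing and $(E(s|1),E(s|2))=(1,1)$. Hence no argument can close your gap for the statement as literally written. (A minor slip along the way: $\tfrac12(b+a_kba_k)$ is not a $\pm1$-valued observable --- its square is $\tfrac14(2\mathbbm{1}+abab+baba)\neq\mathbbm{1}$ in general --- but it is a hermitian contraction, which is all you need for $|E(s|k)|\le 1$.)

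You are in good company: the paper's own proof opens with the identity $E(s|1)+E(s|2)=2S_+$, which contains the same sign slip, traceable to equation\eq{Ss}, where the $\tfrac14\langle\psi|b|\psi\rangle$ term should in fact have cancelled; your blind reconstruction faithfully reproduces the paper's flaw. Everything is repaired by relabeling $s\mapsto -s$ for one of Alice's settings: the correct constraint is $|E(s|1)-E(s|2)|\le 1$, your $S_+$ reduction then works verbatim, and the attainable region is the hexagon obtained from the square $[-1,1]^2$ by cutting the corners $(1,-1)$ and $(-1,1)$. Note also that in the sufficiency direction your (and the paper's) convex hull of the four vertices $(\pm1,0)$, $(0,\pm1)$ is only the diamond $|E(s|1)|+|E(s|2)|\le 1$, not the full region: the two further extreme points are $(1,1)$ and $(-1,-1)$, attainable trivially as above. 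Since those two points are setting-independent channels, they carry zero capacity, so your capacity argument --- convexity of capacity in the transition probabilities, reduction to the Z-channel vertex, optimal input $(3/5,2/5)$ and value $\log_2(5/4)$ --- is correct and identical to the paper's, and the capacity claim of the theorem survives the sign correction unchanged.
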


This result is illustrated in figure~\ref{figEs}.

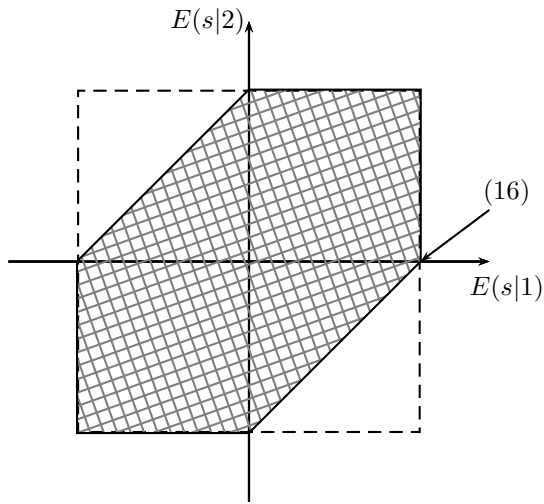
\begin{figure}
\psset{unit=130pt}
\centering{\begin{pspicture}(-.2,-.2)(1.2,1.2)
\psaxes{->}(.5,.5)(-.2,-.2)(1.2,1.2)
\psframe[linestyle=dashed](0,0)(1,1)
\rput(1.25,.42){$E(s|1)$}
\rput(.38,1.2){$E(s|2)$}
\psset{fillstyle=crosshatch,fillcolor=lightgray,hatchcolor=gray,hatchangle=20}
\pscustom{
	\psline(0.5,0)(1,0.5)
	\psline(1,0.5)(1,1)
	\psline(1,1)(.5,1)
	\psline(.5,1)(0,.5)
	\psline(0,.5)(0,0)
	\psline(0,0)(.5,0)}
\psline{->}(1.2,.65)(1,.5)
\rput(1.25,.7){(\ref{protocol})}
\end{pspicture}}
\caption{Possible pairs of $E(s|1)$ and $E(s|2)$ as they can appear in quantum theory. The whole square-shaped box is the whole region of principally allowed values $-1\leq E(s|1),E(s|2)\leq 1$.}
\label{figEs}
\end{figure}

\begin{proof}
Since $E(s|1)-E(s|2)=2S_+$, the constraint $|E(s|1)-E(s|2)|\leq 1$ immediately follows. On the other hand, the qubit protocol\eq{protocol} achieves $E(s|1)=1$, $E(s|2)=0$, which is one of the four non-trivial vertices of the convex quadrangle shown if figure~\ref{figEs}. The other three vertices can be attained by the same protocol after possibly switching $s\leftrightarrow -s$ and $a_1\leftrightarrow a_2$. Now since the quantum region has to be convex, and the quadrangle is the smallest convex set containing its vertices, it follows that ${|E(s|1)-E(s|2)|\leq 1}$ is also sufficient for the existence of a quantum-mechanical model.

Now we get to the capacity statement. Since classical communication capacity is a convex function of the transition probabilities, we know that the maximal capacity is attained at the quadrangle's vertices. Since the four vertices are all simple permutations of the protocol\eq{protocol}, the corresponding channels have equal capacity, and it is sufficient to calculate the capacity achievable by the data\eq{protocol}. A direct calculation shows that the optimal input distribution is a relative frequency of $3/5$ for $a_1$ and $2/5$ for $a_2$, resulting in a mutual information of $\log_2(5/4)\approx 0.32$ bits.
\end{proof}

\section{A temporal version of Hardy's nonlocality paradox}
\label{hardysection}

Hardy's paradox~\cite{Mer} occurs when the joint probabilities have the following properties:
\beqn
\label{hardyparadox}
\begin{split}
P(+1,+1|1,1)=0\\
P(-1,+1|1,2)=0\\
P(+1,-1|2,1)=0\\
P(+1,+1|2,2)>0
\end{split}
\eeqn
This is impossible in any realistic theory where Alice's measurements are non-invasive. We note that the only relevant information contained in hidden variables lies in the preexisting values of all relevant observables. Hence any (stochastic) hidden variable model is given by a statistical mixture of the 16 realistic states
\beqn
\label{realistic}
a_1^\pm a_2^\pm b_1^\pm b_2^\pm
\eeqn
where in this notation (from~\cite{Lap}), each sign stands for the corresponding measurement outcome it determines with certainty, and the four signs are independent of each other. By the assumption $P(+1,+1|2,2)>0$, we know that this statistical mixture contains at least one state of the form
\beq
a_1^\pm a_2^+ b_1^\pm b_2^+.
\eeq
But now due to $P(-1,+1|1,2)=0$, this cannot be one of the two states $a_1^- a_2^+ b_1^\pm b_2^+$. Likewise by $P(+1,-1|2,1)=0$, it cannot be one of the two states $a_1^\pm a_2^+ b_1^- b_2^+$. Therefore, the statistical mixture of realistic states necessarily contains the state
\beq
a_1^+ a_2^+ b_1^+ b_2^+
\eeq
but now this contradicts the assumption $P(+1,+1|1,1)=0$\:! Therefore, the existence of joint probabilities with the property\eq{hardyparadox} exhibits a rather strong form of contextuality. Note that this kind of reasoning applies to a spatial as well as to a temporal Bell test scenario.

In fact, \eq{hardyparadox} is indeed realizable in quantum theory, and it is known that the maximal value for $P(+1,+1|2,2)$ in a spatial scenario is approximately $0.09$~\cite{Mer}. Here we would like to determine the maximal possible value of $P(+1,+1|2,2)$ in the temporal CHSH scenario. Again, since joint probabilities for the temporal case comprise those of the spatial case, the maximal temporally realizable value of $P(+1,+1|2,2)$ has to be at least $0.09$. We will now proceed to show that one can achieve a substantially higher value than this. This shows again that temporal quantum correlations are often stronger than spatial quantum correlations.

\begin{thm}
The maximal value for $P(+1,+1|2,2)$ in the temporal Hardy paradox is $1/4$.
\end{thm}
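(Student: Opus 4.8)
The plan is to turn every probability in\eq{hardyparadox} into a squared norm and read off the vanishing conditions as orthogonality constraints. Writing $A_k^{\pm}=\tfrac12(\mathbbm{1}\pm a_k)$ and $B_l^{\pm}=\tfrac12(\mathbbm{1}\pm b_l)$ for the spectral projectors, equation\eq{jointprobs} becomes $P(r,s|k,l)=\langle\psi|A_k^{r}B_l^{s}A_k^{r}|\psi\rangle=\|B_l^{s}A_k^{r}|\psi\rangle\|^2$, using that $A_k^{r}$ is idempotent and $B_l^{s}$ is a projector. The three vanishing conditions then read $B_1^{+}A_1^{+}|\psi\rangle=0$, $B_2^{+}A_1^{-}|\psi\rangle=0$ and $B_1^{-}A_2^{+}|\psi\rangle=0$, while the quantity to be maximized is $P(+1,+1|2,2)=\|B_2^{+}A_2^{+}|\psi\rangle\|^2$. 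I abbreviate $|v\rangle:=A_2^{+}|\psi\rangle$ and $|w\rangle:=A_1^{-}|\psi\rangle$, with squared norms $p:=\|v\|^2=\langle\psi|A_2^{+}|\psi\rangle$ and $q:=\|w\|^2=\langle\psi|A_1^{-}|\psi\rangle$.

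Next I would compress the two conditions involving $b_1$ into a single scalar identity. The first says $A_1^{+}|\psi\rangle$ lies in the range of $B_1^{-}$, the third says $A_2^{+}|\psi\rangle$ lies in the range of $B_1^{+}$; since these ranges are orthogonal complements, this forces $\langle\psi|A_1^{+}A_2^{+}|\psi\rangle=\langle A_1^{+}\psi|A_2^{+}\psi\rangle=0$. (Only necessity is needed here; the converse merely confirms that a compatible $b_1$ exists.) The remaining condition $B_2^{+}|w\rangle=0$ says $|w\rangle\in\ker B_2^{+}$, and since $b_2$ is otherwise free, $B_2^{+}$ ranges over all projectors annihilating $|w\rangle$. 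The maximum of $\|B_2^{+}|v\rangle\|^2$ over such projectors is $p-|\langle w|v\rangle|^2/q$, attained by projecting off the single direction $|w\rangle$. The key simplification is that the orthogonality just derived gives $\langle w|v\rangle=\langle\psi|A_1^{-}A_2^{+}|\psi\rangle=\langle\psi|A_2^{+}|\psi\rangle-\langle\psi|A_1^{+}A_2^{+}|\psi\rangle=p$, so that $P(+1,+1|2,2)\le p-p^2/q$.

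It then remains to optimize the scalar bound and exhibit a saturating model. For fixed $q>0$ the function $p\mapsto p-p^2/q$ is maximized at $p=q/2$ with value $q/4$, and since $A_1^{-}\le\mathbbm{1}$ forces $q\le 1$, I conclude $P(+1,+1|2,2)\le 1/4$; the degenerate case $q=0$ forces $p=0$ and hence the value $0$. For the matching lower bound I would give an explicit qubit, namely $|\psi\rangle=|1\rangle$ with $a_1=b_2=\sigma_z$ and $a_2=b_1=\sigma_x$: a direct check makes the three probabilities vanish (because $A_1^{+}|1\rangle=0$, $B_2^{+}|1\rangle=0$ and $A_2^{+}|1\rangle\propto|{+}\rangle\in\ker B_1^{-}$) and gives $P(+1,+1|2,2)=\tfrac12\,|\langle 0|{+}\rangle|^2=1/4$, which is exactly the configuration $q=1$, $p=1/2$ predicted by the optimization.

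The main obstacle is the second paragraph: one has to notice that the optimization decouples, so that after the two $b_1$-conditions are collapsed into the single identity $\langle\psi|A_1^{+}A_2^{+}|\psi\rangle=0$, the free observable $b_2$ can be eliminated analytically as a rank-one projection problem. Once that structural observation is made, the remaining content is the elementary estimate $p-p^2/q\le q/4\le 1/4$ together with the qubit construction, both routine.
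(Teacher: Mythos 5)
Your proof is correct and follows essentially the same route as the paper's: both derive the orthogonality $\langle\psi|A_1^{+}A_2^{+}|\psi\rangle=0$ from the two $b_1$-constraints, eliminate $b_2$ by dominating $B_2^{+}$ with the projector onto the orthogonal complement of $A_1^{-}|\psi\rangle$, reduce to a scalar quadratic maximized at $1/4$, and exhibit a saturating qubit model (yours is a unitary relabeling of the paper's). If anything, your explicit $1/q$ normalization of the rank-one bound and your separate treatment of the degenerate case $q=0$ make the $b_2$-elimination step tighter than the paper's corresponding justification.
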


\begin{proof}
In order for a probability $P(r,s|k,l)$ like\eq{jointprobs} to vanish, one needs that
\begin{itemize}
\item either Alice's outcome $r$ by itself is already impossible to occur, i.e. the other outcome $-r$ occurs with certainty. This means that the initial state is a $-r$-eigenstate of $a_k$, $a_k|\psi\rangle=-r|\psi\rangle$.
\item or Alice's post-measurement state $\frac{\mathbbm{1}+ra_k}{2}|\psi\rangle$ (unnormalized) is such that Bob's outcome is impossible, i.e. it is a $-s$ eigenstate of $b_l$,
\beq
b_l(\mathbbm{1}+ra_k)|\psi\rangle=-s(\mathbbm{1}+ra_k)|\psi\rangle
\eeq
\end{itemize}
Hence, vanishing joint probability is equivalent to
\beqn
\label{jointvanish}
(\mathbbm{1}+sb_l)(\mathbbm{1}+ra_k)|\psi\rangle=0
\eeqn
which can be interpreted as a vanishing amplitude for the two outcomes to occur together. So the vanishing constraints from\eq{hardyparadox} are equivalent to
\begin{align}
&(\mathbbm{1}+b_1)(\mathbbm{1}+a_1)|\psi\rangle=0\label{hardyb1a1},\\
&(\mathbbm{1}+b_2)(\mathbbm{1}-a_1)|\psi\rangle=0\label{hardyb2a1},\\
&(\mathbbm{1}-b_1)(\mathbbm{1}+a_2)|\psi\rangle=0\label{hardyb1a2}.
\end{align}
The qubit protocol
\beq
\label{hardymax}
|\psi\rangle=|x+\rangle,\qquad a_1=-\sigma_x,\qquad a_2=\sigma_y,\qquad b_1=\sigma_y,\qquad b_2=-\sigma_x
\eeq
does indeed satisfy all of these constraints, and it achieves a value of $P(+1,+1|2,2)=1/4$ as promised. The remaining part of the proof is dedicated to showing that this value is optimal.

For $b_1$, the equations\eq{hardyb1a1} to\eq{hardyb1a2} mean that $(\mathbbm{1}+a_1)|\psi\rangle$ has to lie in the $-1$-eigenspace of $b_1$ (since this vector has zero projection onto the $+1$-eigenspace), and similarly that $(\mathbbm{1}+a_2)|\psi\rangle$ has to lie in the $+1$-eigenspace of $b_1$. These eigenspaces are necessarily orthogonal since $b_1$ is hermitian. Hence, given any initial state $|\psi\rangle$ together with $\pm 1$-valued observables $a_1,a_2,b_2$ which satisfy\eq{hardyb2a1}, we can find an observable $b_1$ which also satisfies\eq{hardyb1a1} and\eq{hardyb1a2} if and only if
\beqn
\label{hardya1a2}
(\mathbbm{1}+a_1)|\psi\rangle\perp(\mathbbm{1}+a_2)|\psi\rangle,\qquad\textrm{i.e.}\qquad\langle\psi|(\mathbbm{1}+a_1)(\mathbbm{1}+a_2)|\psi\rangle=0,
\eeqn
holds. So this condition is equivalent to\eq{hardyb1a1} and\eq{hardyb1a2} together and also comprises the case that $|\psi\rangle$ is any eigenstate of $a_1$ or $a_2$.

Now imagine that we have $|\psi\rangle$, $a_1$ and $a_2$ such that\eq{hardya1a2} is satisfied. Then what can we choose for $b_2$ in order to also satisfy\eq{hardyb2a1}? Equation\eq{hardyb2a1} means exactly that $(\mathbbm{1}-a_1)|\psi\rangle$ is contained in the $-1$-eigenspace of $b_2$. When $p$ stands for the projection operator onto the vector $(\mathbbm{1}-a_1)|\psi\rangle$, this means exactly that
\beq
p\leq\frac{\mathbbm{1}-b_2}{2}.
\eeq
(Here, the partial order ``$\leq$'' is the usual partial order on the set of hermitian operators\footnote{recall that $x\leq y$ in this order is defined to mean that $y-x$ is positive semi-definite.}.) On the other hand, we have
\beq
(\mathbbm{1}-a_1)|\psi\rangle\langle\psi|(\mathbbm{1}-a_1)\leq {4p}
\eeq
since the norm of $(\mathbbm{1}-a_1)|\psi\rangle$ is at most ${2}$. Hence, we can conclude from these two inequalities that
\beq
\mathbbm{1}+b_2\leq \mathbbm{2}-2p\leq \mathbbm{2}-{\frac{1}{2}}(\mathbbm{1}-a_1)|\psi\rangle\langle\psi|(\mathbbm{1}-a_1).
\eeq
When plugging this result into the expression for the ``paradoxical'' probability $P(+1,+1|2,2)$, we obtain
\begin{align*}
P(+1,+1|2,2)&=\tfrac{1}{8}\langle\psi|(\mathbbm{1}+a_2)(\mathbbm{1}+b_2)(\mathbbm{1}+a_2)|\psi\rangle\\
&\leq\tfrac{1}{2}\left\langle\psi\left|\left(\mathbbm{1}+a_2\right)\right|\psi\right\rangle-{\tfrac{1}{16}}\langle\psi|(\mathbbm{1}+a_2)(\mathbbm{1}-a_1)|\psi\rangle\langle\psi|(\mathbbm{1}-a_1)(\mathbbm{1}+a_2)|\psi\rangle,
\end{align*}
where it has been used that $(\mathbbm{1}+a_2)^2=2(\mathbbm{1}+a_2)$. But now\eq{hardya1a2} can be applied to evaluate the second term by using
\begin{align*}
\langle\psi|(\mathbbm{1}+a_2)(\mathbbm{1}-a_1)|\psi\rangle=& 2\langle\psi|(\mathbbm{1}+a_2)|\psi\rangle-\langle\psi|(\mathbbm{1}+a_2)(\mathbbm{1}+a_1)|\psi\rangle\\
\stackrel{(\ref{hardya1a2})}{=}& 2\langle\psi|(\mathbbm{1}+a_2)|\psi\rangle.
\end{align*}
Hence we finally end up with
\beq
P(+1,+1|2,2)\leq\tfrac{1}{2}\langle\psi|(\mathbbm{1}+a_2)|\psi\rangle-\tfrac{1}{4}\langle\psi|(\mathbbm{1}+a_2)|\psi\rangle^2
\eeq
which is of the form $x-x^2$ for $x=\frac{1}{2}\langle\psi|(\mathbbm{1}+a_2)|\psi\rangle$. The maximal value of this function is $\frac{1}{4}$, hence the claim is proven.
\end{proof}

\paragraph*{Acknowledgements.} This work was initiated in joint discussions with Sibasish Ghosh and Tomasz Paterek, who therefore have contributed substantially to the present results. It all happened thanks to an invitation by Andreas Winter for the author to visit the Centre for Quantum Technologies in Singapore. Ramon Lapiedra has kindly provided useful comments on an earlier version of this manuscript and endured a long discussion over his work~\cite{Lap}. Four years after this paper was first published, Zhen-Peng Xu pointed out several mathematical mistakes, which have been corrected in this version. Finally, this work would not have been possible without the excellent research conditions within the IMPRS graduate program at the Max Planck Institute.

\bibliographystyle{halpha.bst}
\bibliography{temporal_correlations}

\end{document}